\theoremstyle{plain}
\newtheorem{theorem}{Theorem}[section]
\newtheorem{lemma}[theorem]{Lemma}
\newtheorem{corollary}[theorem]{Corollary}
\theoremstyle{remark}
\newtheorem{remark}[theorem]{Remark}
\numberwithin{equation}{section}
\numberwithin{theorem}{section}
\newcommand{\mc}[1]{{\mathcal #1}}
\newcommand{\bs}[1]{{\mathbf #1}}
\newcommand{\bb}[1]{{\mathbb #1}}
\newcommand{\rme}{\mathrm{e}}
\newcommand{\rmd}{\mathrm{d}}
\title[Infinite Cucker-Smale-type dynamics with repulsions]{Cucker-Smale type dynamics of infinitely many individuals with repulsive forces}
\author[P.\ Butt\`a]{Paolo Butt\`a}
\address{Paolo Butt\`a\hfill\break \indent
Dipartimento di Matematica, 
Sapienza Universit\`a di Roma,
\hfill\break \indent
P.le Aldo Moro 5, 00185 Roma, Italy}
\email{butta@mat.uniroma1.it}
\author[C.\ Marchioro]{Carlo Marchioro}
\address{Carlo Marchioro \hfill\break \indent
Dipartimento di Matematica, 
Sapienza Universit\`a di Roma,
\hfill\break \indent
P.le Aldo Moro 5, 00185 Roma, Italy \hfill\break \indent and \hfill\break \indent  
International Research Center M\&MOCS, Universit\`a di L'Aquila \hfill\break \indent Palazzo Caetani, 04012 Cisterna di Latina (LT), Italy}
\email{marchior@mat.uniroma1.it}
\begin{document}
	
\begin{abstract}
We study the existence and uniqueness of the time evolution of a system of infinitely many individuals, moving in a tunnel and subjected to a Cucker-Smale type alignment dynamics with compactly supported communication kernels and to short-range repulsive interactions to avoid collisions.
\end{abstract}
	
\keywords{Infinite dynamics, Cucker-Smale, Individual-based models, Flocking}

\subjclass[2010]{
Primary: 82C22. 
Secondary: 92D25,
37N25
}
	
\maketitle
\thispagestyle{empty}

\section{Introduction}
\label{sec:1}

The Cucker-Smale (CS) model \cite{CS1,CS2} is a deterministic system that aims at describing the self-organization of individuals in a population. For a group of $N$ individuals moving in $\bb R^d$, it reads,
\begin{equation}
\label{CS}
\begin{cases}
\dot {\bs x}_i = \bs v_i\,, \\ \dot {\bs v}_i = \displaystyle{\sum_{j=1}^N \psi(|\bs x_i - \bs x_j|)(\bs v_j - \bs v_i)}\,,
\end{cases} \quad i=1,\ldots,N\,,
\end{equation}
where the pair $(\bs x_i,\bs v_i)\in \bb R^d\times \bb R^d$ denotes the position and velocity of the individual (sometimes named ``particle") $i\in \{1,\ldots,N\}$. Here, $\psi$ is the mutual communication rate between the individuals, that in the original model is chosen of the form,
\begin{equation}
\label{cr}
\psi(u) = \frac {K_0}{(1+u)^\beta}\,, \quad K_0,\beta>0\,.
\end{equation}

The system \eqref{CS} was conceived as a model for \textit{flocking}. Introducing the position and velocity of the center of mass,
\begin{equation}
\label{cm}
\bs x_c := \frac 1N \sum_{j=1}^N \bs x_j\,, \quad \bs v_c := \frac 1N \sum_{j=1}^N \bs v_j\,, 
\end{equation} 
the system exhibits time-asymptotic flocking if
\begin{equation}
\label{flock}
\lim_{t\to+\infty} |\bs v_i(t)-\bs v_c(t)|=0\,, \quad \sup_{t\ge 0} |\bs x_i(t) - \bs x_c(t)| < + \infty \qquad \forall\,i=1,\ldots,N\,.
\end{equation}

It is known that flocking occurs for all initial conditions provided $\beta\le \frac 12$ and for some initial conditions otherwise \cite{CS1,CS2,HL}. In fact, when flocking occurs, also the relative positions converge to an asymptotic configuration: $\bs x_i(t) - \bs x_c(t) \to \bar{\bs x}_i$ as $t\to+\infty$ for suitable $\bar{\bs x}_i$, $i=1,\ldots,N$.

In many real systems, the size of the population could be extremely large, so that a natural issue concerns the behavior of the dynamics when the number of individuals is huge, i.e., in the limit $N\to\infty$. In this direction, a Vlasov-type kinetic model with flocking dissipation has been derived from the many particle (CS)-system \eqref{CS} in a mean-field regime, which is realized by choosing $K_0=\lambda/N$ in \eqref{cr}, for some fixed $\lambda>0$, and letting $N\to\infty$. Furthermore, the resulting kinetic model exhibits time-asymptotic flocking behavior for arbitrary compactly supported initial data.

Beyond the mean-field approximation, a more realistic assumption is that the strength $K_0$ of the communication rate does not depend on the size $N$ of the system. Moreover, it is reasonable to assume that the communication rates have finite range (i.e., replacing $\psi$ in \eqref{cr} by a function with compact support). Indeed, this is consistent with a sharp distinction between the macroscopic size of the system and the typical interaction length among the individuals. In the framework of mean-field approximation, we mention the recent paper \cite{MPT}, where the large time behavior of a continuum alignment dynamics based on (CS)-type interactions with short-range kernel is studied. Finally, to avoid collisions of individuals, it is appropriate to add a repelling force which acts whenever a pair of particles get close, with a strength that increases with this closeness. 

Since the total mass of the system diverges, the existence and locality of the dynamics for infinitely many individuals (in short ``infinite dynamics") is more subtle with respect to the mean-field approximation. 

The existence of the time evolution for systems composed by infinitely many particles moving according to Newton's laws of motion is a classical issue in non-equilibrium statistical mechanics, and several studies have been devoted to this subject, see, e.g., \cite{BPY99,BCM2d,BCMhc,CMP,CC,CMS,DF,FrD77,Lan68,Lan69,P}. For a summary, see for instance the  Appendix 1 of \cite{BCM}. The difficulty of the problem is related to the dimension of the space and the kind of mutual interaction among the particles. In dimension one the problem is solved for almost any kind of interaction, while in dimension two the interaction is required either to be bounded or to diverge at least as an inverse power of the distance between particles. In dimension three, the existence of the infinite dynamics is proved solely in the case of bounded interactions. 

The choice of the initial conditions is a crucial point in the construction of the infinite dynamics. In the framework of non-equilibrium statistical mechanics, for a particle model to be meaningful, the initial conditions have to be chosen in a set which is typical for any reasonable thermodynamic (equilibrium or non-equilibrium) state. Under mild hypothesis on the forces, to this purpose it is sufficient to consider locally finite configurations which have local energy and number of particle fluctuations only of logarithmic order.

When alignment is present, contrary to the case of fundamental interactions, the mutual forces depend also on the velocities and not only on the positions of the particles. This requires a nontrivial adaptation of the techniques developed in the case of classical particle systems. In fact, although its dissipative nature (it makes decreasing the total mechanical energy), the alignment can increase considerably the local energy, whose variation in time turns out to be the quantity one has to control for proving the locality of the dynamics.

In this paper, we rigorously show the good position of the infinite (CS)-type dynamics with repulsions, in the case the individuals are posed in an infinitely extended tube in $\bb R^3$. For the difficulties described above, we are able to handle only this quasi-one-dimensional case. On the other hand, this geometry is suitable to describe situation of real interest (e.g., the collective motion of fishes crowd in a very long channel or people moving in a very long tunnel). Instead, concerning the kind of interaction, the extension of our analysis to the case of rapidly decreasing rates, i.e., $\psi$ as in \eqref{cr} with $\beta\gg 1$, can be treated with some more technicalities.

Beyond existence and locality, much less is known about the long-time behavior of the infinite dynamics. In the case of classical particle systems, we just mention some nontrivial results obtained in recent years, aimed at a microscopic justification of viscous friction \cite{BCM02,BCM03,BCM04,BCM,CM}. In the present context, we are not able to give an example of flocking for the individuals contained in the tunnel. However, at a heuristic level, it seems not impossible. The first step should be to consider a simplified model, in which only the external force confining the individuals in the tube and the mutual alignment are present. Moreover, the latter is assumed with a range larger than the diameter of the tunnel. Then, we choose initial data such that the interaction is essentially binary and we observe that during this binary scattering the energy decreases and the velocities of the two individuals get closer. Of course, the hope is that these effects are large enough for flocking to occur.

The plan of the paper is the following. The next section is devoted to notation, preliminary material, and statement of the result. Section \ref{sec:3} is devoted to the proofs. 

\section{Notation and statement of the results}
\label{sec:2}

The individuals are confined to move freely in an infinitely extended tube of $\bb R^3$. More precisely, we fix a unit vector $\bs n$, a positive real $L>0$, and we  denote by $\bs x^\perp = \bs x - (\bs x\cdot\bs n)\bs n$ the orthogonal projection of $\bs x\in\bb R^3$ along $\bs n$. Then, we consider the tube $\Omega := \{\bs x \in \bb R^3 \colon |\bs x^\perp| < L\}$ of symmetry axis $\bs n$ and radius $L$. Denoting by $(\bs x_i,\bs v_i)\in \Omega\times \bb R^3$ the position and velocity of the $i$-th individual, the phase space $\Gamma$ of the whole system is the collection of sequences $\bs X  =\{(\bs x_i,\bs v_i)\}_{i\in \bb N}$ which are locally finite (i.e., the number of particles inside any bounded region is finite), equipped with the topology of local convergence.

We force the individuals to be confined inside the tube $\Omega$, by requiring that all of them are subjected to a one-body potential of the form
\begin{equation}
\label{V}
\Theta(\bs x) = \frac{\theta_h(|\bs x^\perp|)}{(L-|\bs x^\perp|)^\gamma}\;, \qquad \bs x\in\Omega\;,
\end{equation}
where $\gamma>0$, $h\in (0,L)$, and $\theta_h(s)$, $s\in \bb R^+$, is a non-negative, twice differentiable function, identically zero for $s\le h$ and strictly positive at $s=L$. 

Our task is to prove the well-posedness of the following system of infinitely many ordinary differential equations,
\begin{equation}
\label{CS1}
\begin{cases}
\dot {\bs x}_i(t) = \bs v_i(t)\,, \\ \dot {\bs v}_i(t) = \sum_j \psi_{i,j}(t)\, (\bs v_j(t) - \bs v_i(t)) + \bs F_i(\bs X(t))\,,
\end{cases}
\quad i\in \bb N\,,
\end{equation}
where we used the sharp notation
\[
\psi_{i,j}(t) = \psi(|\bs x_i(t) - \bs x_j(t)|)
\]
for the communication rates, and the force $\bs F_i$ is given by
\begin{equation}
\label{F}
\bs F_i(\bs X) = - \sum_{j\ne i}\nabla U (\bs x_i - \bs x_j) - \nabla\Theta(\bs x_i)\,.
\end{equation}
Above, the function $\psi \in C(\bb R)$ is assumed symmetric, non-negative, with compact support. The potential $U$ is non-negative, symmetric, short-range, of the form
\begin{equation}
\label{U}
U(\bs x) =U_1(\bs x) + a |\bs x|^{-b}\;,
\end{equation}
where $a\ge 0$, $b>0$, and $U_1$ is twice differentiable and symmetric. In particular, there exists $\bar r>0$ such that the supports of $\psi$ and $U$ are contained in the interval $[-\bar r,\bar r]$. If $U$ is finite at the origin, i.e., if $a=0$, we assume $U(0)>0$, which guarantees $U$ to be superstable \cite{Ruelle}. Under these assumptions, we will show that the system \eqref{CS1} determines a differentiable flow on a non-trivial subset $\mc X$ of $\Gamma$.

In the case of standard potential forces, the conservation of the particle number is sufficient to prove the existence of the infinite dynamics in dimension $d=1$ (or in quasi-one-dimensional regions like the infinitely extended tube $\Omega$), while in dimension $d=2$ the crucial tool is the energy conservation. In the present case, since the force depends on the velocity, the conservation of the particle number is not sufficient to prove the result and we need to use the energy, which is dissipated along the motion in this case.

Given a configuration $\bs X = \{(\bs x_i,\bs v_i)\}_{i\in \bb N}$, for any $\mu\in\bb R$ and $R>0$, we consider the quantity
\begin{equation}
\label{QR1}
Q(\bs X;\mu,R) := \sum_i \chi_i(\mu,R) \,
\bigg\{ \frac{\bs v_i^2}{2} + \frac 12 \sum_{j:j\ne i} U(\bs x_i - \bs x_j) + \Theta(\bs x_i) + 1 \bigg\}\;,
\end{equation}
where $\chi_i(\mu,R) = \chi(|\bs x_i\cdot\bs n - \mu|\le R)$. Clearly, $Q(\bs X;\mu,R)$ is the sum of the energy and number of particles in the finite region 
\[
\Omega(\mu,R) := \{\bs x \in\Omega \colon |\bs x\cdot\bs n - \mu |\le R\}\,,
\]
and we allow initial data with logarithmic divergences in local energy and density. More precisely, letting
\begin{equation}
\label{Q1}
Q(\bs X) := \sup_\mu  \sup_{R:R>\log(\rme+|\mu|)}  \frac{Q(\bs X;\mu,R)}{2R}\;,
\end{equation}
we restrict to the set of configurations
\begin{equation}
\label{mcx}
\mc X := \{\bs X\in \Gamma \colon Q(\bs X)< \infty\}\,.
\end{equation}

The infinite dynamics is constructed as a limit of the so-called $n$-partial dynamics, which is defined in the following way. Given $\bs X=\{(\bs x_i,\bs v_i)\}_{i\in \bb N} \in\mc X$ and $n\in\bb N$  let $I_n := \{i\in\bb N \colon \bs x_i \in \Omega(0,n)\}$. The $n$-partial dynamics $t\mapsto \bs X^{(n)}(t) = \{(\bs x^{(n)}_i(t),\bs v^{(n)}_i(t))\}_{i\in I_n}$ is defined as the solution to the Cauchy problem,
\begin{equation}
\label{CSn}
\begin{cases}
\dot {\bs x}^{(n)}_i(t) = \bs v^{(n)}_i(t)\,, \\ \dot {\bs v}^{(n)}_i(t) = \sum_{j\in I_n} \psi^{(n)}_{i,j}(t)\, (\bs v^{(n)}_j(t)-\bs v^{(n)}_i(t)) + \bs F^{(n)}_i(\bs X^{(n)}(t))\,, \\ X^{(n)}(0) = \{(\bs x_i,\bs v_i)\}_{i\in I_n}\,,
\end{cases}
\end{equation}
where $i\in I_n$ and
\begin{align*}
\psi^{(n)}_{i,j}(t) & = \psi(|\bs x^{(n)}_i(t) - \bs x^{(n)}_j(t)|)\,, \\
\bs F^{(n)}_i(\bs X^{(n)}) & = - \sum_{j\in I_n: j\ne i} \nabla U( (\bs x^{(n)}_j - \bs x^{(n)}_i) - \nabla \Theta(\bs x^{(n)}_i)\,.
\end{align*}

\begin{theorem}
\label{thm:0}
For $\bs X\in\mc X$ the following limits exist,
\begin{equation}
\label{eu1}
\lim_{n\to \infty} \bs x^{(n)}_i(t) = \bs x_i(t)\;, \quad \lim_{n\to \infty} \bs v^{(n)}_i(t) = \bs v_i(t)\;, \quad i\in \bb N\,.
\end{equation}
Moreover, the flow $t\mapsto \bs X(t) = \{(\bs x_i(t),\bs v_i(t))\}_{i\in\bb N}$ is the unique (global) solution to \eqref{CS1} such that $	\bs X(t)\in\mc X$. 
\end{theorem}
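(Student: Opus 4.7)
The plan is to show that the family $\{\bs X^{(n)}(t)\}_{n\in\bb N}$ is Cauchy component by component on every bounded time interval, and that the limit solves \eqref{CS1} and lies in $\mc X$; uniqueness in $\mc X$ will then follow from the same comparison argument applied to two putative solutions. First I would verify that for each $n$ the finite system \eqref{CSn} has a unique global smooth solution: the vector field is smooth off the diagonal, collisions are excluded either by the superstability hypothesis $U(0)>0$ or, when $a>0$, by the singular repulsive barrier; the confining potential $\Theta$ keeps the particles in $\Omega$; and the CS dissipation identity
$$
\frac{d}{dt}\sum_{i\in I_n}\frac{|\bs v_i^{(n)}|^2}{2} = -\frac 12\sum_{i,j\in I_n}\psi_{i,j}^{(n)}|\bs v_j^{(n)}-\bs v_i^{(n)}|^2 + \sum_{i\in I_n}\bs v_i^{(n)}\cdot \bs F_i^{(n)}
$$
together with conservation of total energy minus the dissipated part controls the kinetic energy of the partial system globally in time.

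The core step is an a priori bound on $Q(\bs X^{(n)}(t);\mu,R)$ uniform in $n$. Differentiating $Q(\bs X^{(n)}(t);\mu,R)$ in time produces three contributions: (a) boundary fluxes generated by particles crossing $\partial\Omega(\mu,R)$; (b) alignment couplings involving pairs $(i,j)$ with $i\in\Omega(\mu,R)$ and $|\bs x_i^{(n)}-\bs x_j^{(n)}|\le\bar r$; (c) the residual of the potential energy for pairs straddling the boundary. Since $\psi$ and $U$ are supported in $[-\bar r,\bar r]$, each contribution can be majorized by the energy and number of particles in the inflated slab $\Omega(\mu,R+\bar r)$, yielding an integral inequality of the form
$$
Q(\bs X^{(n)}(t);\mu,R)\le Q(\bs X;\mu,R+\bar r) + C\int_0^t \sqrt{Q(\bs X^{(n)}(s);\mu,R+\bar r)}\,Q(\bs X^{(n)}(s);\mu,R+\bar r)\,\rmd s,
$$
the first factor bounding the maximal speed in the slab through $v_{\max}\le\sqrt{2Q}$. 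Iterating this inequality on a hierarchy of slabs whose radii expand geometrically, and using $\bs X\in\mc X$ to absorb the logarithmic growth in $|\mu|$, one closes a Gronwall-type argument and obtains a bound of the shape $Q(\bs X^{(n)}(t);\mu,R)\le C(t)\,R\log(\rme+|\mu|)$ for all $R>\log(\rme+|\mu|)$, uniformly in $n$.

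The uniform energy bound implies that the maximal velocity inside any bounded slab grows at most like $\sqrt{\log(\rme+|\mu|)}$, so the effective signal-spreading speed is sub-linear in position. To conclude existence, I fix $i\in\bb N$ and $T>0$ and construct an expanding slab around $\bs x_i$ whose radius satisfies $\dot R(t)\sim\sqrt{\log(\rme+R(t))}$ with constant depending on $Q(\bs X)$; the speed bound ensures that no particle initially outside this slab can affect particle $i$ on $[0,T]$. For $n,m$ so large that $I_n$ and $I_m$ both contain every index whose initial position lies in the slab, the two partial dynamics satisfy the same integral equations on the slab up to boundary corrections of controlled size, and a Gronwall estimate on $\sup_j\bigl(|\bs x_j^{(n)}(t)-\bs x_j^{(m)}(t)|+|\bs v_j^{(n)}(t)-\bs v_j^{(m)}(t)|\bigr)$ (exploiting the Lipschitz regularity of $\psi$, $\nabla U_1$, $\nabla\Theta$ and the repulsive barrier keeping pairwise distances bounded below) forces this difference to vanish as $n,m\to\infty$, yielding \eqref{eu1}. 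Uniqueness in $\mc X$ follows by applying the same localization-plus-Gronwall argument to two solutions with the same initial datum. I expect the principal obstacle to be the a priori step: unlike in the case of purely positional interactions, the velocity-dependent alignment force can locally pump kinetic energy even while dissipating total energy globally, so the bookkeeping of the non-local flux into $\Omega(\mu,R)$ and the choice of the hierarchy of slabs must be arranged delicately in order to close the Gronwall loop against the logarithmic growth permitted by $\mc X$.
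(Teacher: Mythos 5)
Your overall architecture (local energy estimate, velocity bound, finite speed of propagation, Gronwall on differences of partial dynamics) is the same as the paper's, but the core step --- the a priori bound on the local energy --- is not correctly set up, and this is precisely where the paper has to work hardest. Your integral inequality
\[
Q(\bs X^{(n)}(t);\mu,R)\le Q(\bs X;\mu,R+\bar r)+C\int_0^t \sqrt{Q}\,Q\,\rmd s
\]
models the contribution to the slab as (maximal speed) times (local energy), i.e.\ it treats the alignment force as if it acted on the same footing as a positional force. It does not: the alignment term in $\dot\varepsilon_i$ is $\sum_j\psi_{i,j}(\bs v_j-\bs v_i)\cdot\bs v_i$, quadratic in the velocities, so the relevant transport rate is $1+V_n^2\sim Q$ rather than $V_n\sim\sqrt{Q}$ (this is exactly the point of Remark \ref{rem:Rn}). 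Moreover $\dot y=Cy^{3/2}$ (let alone $\dot y=Cy^{2}$) blows up in finite time, and with $y_0\sim R$ the blow-up time shrinks as $R$ grows, so the inequality as written cannot be closed by Gronwall on a fixed time interval; you acknowledge this difficulty at the end but do not resolve it. The paper's resolution is structural: it mollifies $Q$ into $W$ with a cutoff $f(|\bs x_i\cdot\bs n-\mu|/R_n(t,s))$ whose radius shrinks in $s$ at the rate $\partial_sR_n(t,s)=-M_n(s)$ with $M_n=1+V_n^2$, so that the genuine boundary-crossing term $\kappa_i$ is non-positive; it then symmetrizes the alignment sum so that the diagonal part $-\sum f_i\psi_{i,j}(\bs v_j-\bs v_i)^2$ is dissipative and only commutator terms $(f_i-f_j)$ survive, each carrying a factor $|\bs x_i-\bs x_j|/R\le\bar r/R$ that restores extensivity. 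This yields $\partial_sW\le -C(\partial_sR_n/R_n)W$, hence a power-law bound $W(s)\le W(0)(R_n(t,0)/R_n(t,s))^C$, after which a \emph{linear} Gronwall closes the self-consistent loop $M_n(t)\le 2KQ(\bs X)R_n(t)$, $R_n(t)=(1+\bar r)\log(\rme+n)+\int_0^tM_n$. Without some equivalent of this device your a priori step does not go through.

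Two smaller points. First, your claimed final bound $Q(\bs X^{(n)}(t);\mu,R)\le C(t)\,R\log(\rme+|\mu|)$ would not even imply $\bs X(t)\in\mc X$, since membership in $\mc X$ requires $\sup_\mu\sup_{R>\log(\rme+|\mu|)}Q/(2R)<\infty$; the bound must be $C(t)R$ uniformly in $\mu$, which is what Lemma \ref{lem:1} delivers. Second, you invoke ``the repulsive barrier keeping pairwise distances bounded below'' to obtain Lipschitz control of $\nabla U$; no such lower bound on interparticle distances is established (and none is available when $a=0$, where $U$ is merely bounded with $U(0)>0$). The paper instead uses $|\nabla U(\bs x)-\nabla U(\bs y)|\le C[U(\bs x)^\eta+U(\bs y)^\eta+\chi(|\bs x|\le\bar r)+\chi(|\bs y|\le\bar r)]\,|\bs x-\bs y|$ and controls the powers $U^\eta$ by powers of the local energy, which is another reason the a priori bound must come first.
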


We conclude the section with a notation warning: in the sequel, if not further specified, we shall denote by $C$ a generic positive constant whose numerical value may change from line to line and it may possibly depend only on the interactions $\Theta,U$ and on the communication rate $\psi$.

\section{Proofs}
\label{sec:3}

A basic tool in the proof of Theorem \ref{thm:0} is an estimate on the growth in time of the local density and energy, which is the content of the following lemma.

\begin{lemma}
\label{lem:1}
There exists a constant $K>0$ such that, for any $\bs X\in\mc X$ and $n\in\bb N$,
\begin{equation}
\label{K0}
\sup_\mu Q(\bs X^{(n)}(t);\mu,R_n(t)) \le K Q(\bs X) R_n(t) \qquad
\forall\, t\ge 0\;,
\end{equation}
where 
\begin{equation}
\label{Rn1}
R_n(t) := (1+\bar r) \log(\rme+n) + \int_0^t\!\rmd s\; M_n(s)
\end{equation}
and
\begin{equation}
\label{Vn1}
M_n(t) := 1+ V_n(t)^2\,, \quad V_n(t) := \max_{i\in I_n} \sup_{s\in [0,t]} |\bs v^{(n)}_i(s)|\,.
\end{equation}
\end{lemma}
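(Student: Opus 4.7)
The strategy is to control $Q(\bs X^{(n)}(t);\mu,R)$ along the $n$-partial flow by differentiating a smoothly cut-off version, bounding the boundary fluxes in terms of $M_n$ and the local energy near the boundary, and closing the resulting inequality by iteration on the radius.

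First I would introduce, for fixed $\mu\in\bb R$ and $R>0$, a smooth cutoff $\phi\in C^1(\bb R^3)$ depending only on $\bs x\cdot\bs n$, satisfying $\phi\equiv 1$ on $\Omega(\mu,R)$, $\phi\equiv 0$ outside $\Omega(\mu,R+1)$, and $\|\nabla\phi\|_\infty\le C$. Setting $\phi_i(t):=\phi(\bs x_i^{(n)}(t))$ and $E_i(\bs X):=\tfrac{|\bs v_i|^2}{2}+\tfrac12\sum_{j\ne i}U(\bs x_i-\bs x_j)+\Theta(\bs x_i)+1$, define $Q^\phi(t):=\sum_{i\in I_n}\phi_i(t)\,E_i(\bs X^{(n)}(t))$, so that $Q^\phi(t)\ge Q(\bs X^{(n)}(t);\mu,R)$. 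Differentiating along \eqref{CSn} and symmetrizing pair sums via $\psi_{ij}=\psi_{ji}$, $U(-\bs x)=U(\bs x)$, together with the algebraic identity $\phi_i\bs v_i-\phi_j\bs v_j=\tfrac{\phi_i+\phi_j}{2}(\bs v_i-\bs v_j)+\tfrac{\phi_i-\phi_j}{2}(\bs v_i+\bs v_j)$, one arrives at
\[
\dot Q^\phi=\sum_i(\bs v_i\cdot\nabla\phi)(\bs x_i^{(n)})\,E_i\;-\;\tfrac14\sum_{i,j}\psi_{ij}(\phi_i+\phi_j)|\bs v_i-\bs v_j|^2\;+\;\mathcal B(t),
\]
where $\mathcal B(t)$ collects boundary contributions proportional to $(\phi_i-\phi_j)$ from both the alignment and the repulsion. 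The middle term is a non-positive dissipation and is discarded. The streaming term and $\mathcal B(t)$ are both supported within a slab of width $O(1+\bar r)$ around $\partial\Omega(\mu,R)$, since $\nabla\phi$ vanishes outside the transition zone and $(\phi_i-\phi_j)\ne 0$ for an interacting pair requires $|\bs x_i-\bs x_j|\le\bar r$ with at least one endpoint near the boundary. Using $|\bs v_i(t)|\le V_n(t)$ on $[0,t]$ and the superstability of $U$ (which simultaneously bounds the number of particles and of interacting pairs in a bounded region by the local $Q$, and absorbs the singularity of $\nabla U$ into $U$ through an estimate of the form $|\nabla U(\bs x)|\le C(1+U(\bs x)^{(b+1)/b})$), I would deduce
\[
\dot Q^\phi(t)\le CM_n(t)\bigl[Q(\bs X^{(n)}(t);\mu-R,\bar r+2)+Q(\bs X^{(n)}(t);\mu+R,\bar r+2)\bigr].
\]

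Finally, I would close the inequality by iterating on radius. For fixed $t>0$, set $R_j:=R_n(t)+j(\bar r+2)$ and $F_j(s):=\sup_\mu Q(\bs X^{(n)}(s);\mu,R_j)$. The previous estimate gives $F_j(s)\le F_j(0)+C\int_0^s M_n(r)F_{j+1}(r)\,dr$, and iterating $k$ times yields
\[
F_0(t)\le\sum_{j=0}^{k-1}F_j(0)\,\frac{(C\tau_n(t))^j}{j!}+\frac{(C\tau_n(t))^k}{k!}\sup_{s\in[0,t]}F_k(s),
\]
where $\tau_n(t):=\int_0^t M_n(s)\,ds=R_n(t)-(1+\bar r)\log(\rme+n)$. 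Using $\bs X\in\mc X$ one controls $F_j(0)\le 2R_j Q(\bs X)$ whenever $R_j>\log(\rme+|\mu|)$ --- guaranteed by the choice $R_n(0)=(1+\bar r)\log(\rme+n)$ in the range $|\mu|\le(\rme+n)^{1+\bar r}-\rme$, while $F_j(0)=0$ for $|\mu|$ beyond the reach $n+R_j$ since $\bs X^{(n)}(0)$ is supported in $\Omega(0,n)$. The tail is handled by the crude a priori bound $F_k(s)\le\sum_{i\in I_n}E_i(s)\le\sum_{i\in I_n}E_i(0)\le 2nQ(\bs X)$, coming from the dissipativity of the total energy along the alignment, together with the choice of $k$ slightly larger than $C\tau_n(t)+\log n$ so that Stirling makes it negligible. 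A careful summation of the leading terms then produces the claimed bound $F_0(t)\le KR_n(t)Q(\bs X)$.

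The main obstacle is obtaining a constant $K$ uniform in $n$ and $t$. This relies on (i) the flux bound above being genuinely linear (rather than quadratic) in the local $Q$, which is where superstability is crucial both for counting interacting pairs straddling the boundary and for taming the singular kernel $\nabla U$; and (ii) a delicate interplay between the prescribed growth $R_n(t)=(1+\bar r)\log(\rme+n)+\int_0^t M_n$, whose logarithmic seed is calibrated to the support of the $n$-partial initial data, and the iteration index $j$, so that the prefactors $R_j$ mesh with the exponential series $\sum_j(C\tau_n)^j/j!$ to produce an estimate linear in $R_n(t)$ rather than exponential.
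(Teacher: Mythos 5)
Your setup (mollified local energy, symmetrization of the alignment and repulsion sums, discarding the negative dissipation, and controlling the flux terms through superstability) matches the spirit of the paper's proof, but the closing step contains a genuine gap. With fixed radii $R_j=R_n(t)+j(\bar r+2)$ the iteration $F_j(s)\le F_j(0)+C\int_0^s M_n F_{j+1}$ produces, after summation with $F_j(0)\le 2R_jQ(\bs X)$, a bound of order
\[
Q(\bs X)\sum_{j\ge 0}\bigl(R_n(t)+j(\bar r+2)\bigr)\frac{(C\tau_n(t))^j}{j!}\;\le\;C\,Q(\bs X)\,R_n(t)\,\rme^{C\tau_n(t)},
\]
where $\tau_n(t)=\int_0^t M_n(s)\,\rmd s$. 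The factor $\rme^{C\tau_n(t)}$ is not a universal constant: $M_n\ge 1+V_n^2$ grows with $n$ and $t$ (indeed $\tau_n(t)$ is comparable to $R_n(t)$ itself), so your final estimate is exponential in $R_n(t)$, not linear, and the constant $K$ in \eqref{K0} cannot be extracted. The "delicate meshing" you invoke in point (ii) is precisely the step that fails; the prefactors $R_j$ do not cancel the exponential series.

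The paper avoids this by evaluating the mollified energy along a \emph{shrinking} radius $R_n(t,s)=(1+\bar r)\log(\rme+n)+\int_0^t M_n+\int_s^t M_n$, calibrated so that $-\partial_s R_n(t,s)=M_n(s)\ge V_n(s)$. This has two effects: the streaming term (your $\sum_i(\bs v_i\cdot\nabla\phi)E_i$) is dominated by the sign-definite contribution coming from the time derivative of the cutoff and can be discarded entirely; and the remaining flux terms yield the differential inequality $\partial_s W\le -C\,\partial_s\log R_n(t,s)\,W$, whose Gr\"onwall exponent is $C\log\bigl(R_n(t,0)/R_n(t,t)\bigr)\le C\log 2$ rather than $C\tau_n(t)$, since $R_n(t,0)\le 2R_n(t)$. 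Without some device of this type your scheme cannot deliver a $K$ uniform in $n$ and $t$. A secondary, fixable imprecision: the bound $|\nabla U(\bs x)|\le C(1+U(\bs x)^{(b+1)/b})$ alone gives a superlinear power of $U$ and hence of the local energy; to get a flux genuinely linear in $Q$ you must pair $\nabla U$ with the factor $|\bs x_i-\bs x_j|$ coming from $\phi_i-\phi_j$ and use $|\bs x|\,|\nabla U(\bs x)|\le C(1+U(\bs x))$, as the paper does.
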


\begin{remark}
\label{rem:Rn}
An estimate like \eqref{K0} is a key ingredient for proving existence and locality of the time evolution of infinitely many interacting particles. It is worthwhile to notice that in the present context, with respect to the case of standard inertial particles, the quantity $R_n(t)$ is not simply given by the maximal displacement of the particles (this would be the case choosing $M_n(t) = V_n(t)$) but quite larger. Indeed, since here the forces depend also on the velocities, this choice is mandatory to recover the extensive property \eqref{K0} of the local energy.
\end{remark}

\begin{proof}[Proof of Lemma \ref{lem:1}] We introduce a mollified version of $Q(\bs X;\mu,R)$,
\begin{equation}
\label{WR}
W(\bs X;\mu,R) := \sum_i f_i^{\mu,R} \, \left\{ \frac{\bs v_i^2}{2} + \frac 12 \sum_{j:j\ne i} U(\bs x_i - \bs x_j) + \Theta(\bs x_i) +1 \right\},
\end{equation}
where 
\begin{equation}
\label{fi}
f^{\mu,R}_i = f\left(\frac{|\bs x_i\cdot \bs n - \mu|}{R}\right)
\end{equation}
and $f\in C^\infty(\bb R_+)$ is not increasing and satisfies: $f(r)=1$ for $r\in [0,1]$, $f(r)=0$ for $r\ge 2$, and $|f'(r)| \le 2$. Clearly,
\begin{equation}
\label{a4}
Q(\bs X;\mu,R) \le W(\bs X;\mu,R) \le Q(\bs X;\mu,2R)\,.
\end{equation}
For $0\le s \le t$, we define
\begin{equation}
\label{a5}
R_n(t,s) :=(1+\bar r) \log(\rme+n) + \int_0^t\!\rmd\tau\; M_n(\tau) + \int_s^t\!\rmd\tau\; M_n(\tau)
\end{equation}
(note that $R_n(t,t) = R_n(t)$ and $R_n(t,0) \le 2R_n(t)$) and compute
\begin{equation}
\label{a6}
\partial_s W(\bs X^{(n)}(s); \mu, R_n(t,s)) = \sum_i \left[\kappa_i(t,s) \varepsilon_i(s) + f^{\mu,R_n(t,s)}_i \dot \varepsilon_i(s)\right]\,,
\end{equation}
where, denoting by $e_i^\mu(s)$ the sign of $\bs x_i(s)\cdot \bs n - \mu$,
\begin{align*} 
\kappa_i(t,s) & = f' \left( \frac{|\bs x_i(s)\cdot \bs n- \mu|}{R_n(t,s)} \right) \left[ \frac{e_i^\mu(s) \bs v_i(s)\cdot \bs n}{R_n(t,s)} - \frac{\partial_s R_n(t,s)}{R_n(t,s)^2}|\bs x_i(s)\cdot \bs n - \mu|\right], \\ \varepsilon_i (s) & = \frac{\bs v_i(s)^2}{2} +\frac 12 \sum_{j\in I_n:j \ne i} U(\bs x_i-\bs x_j) + \Theta(\bs x_i) +1\,,
\end{align*}
and, to simplify notation, we have omitted the explicit dependence on $n$ of $\bs x_i$, $\bs v_i$, $\kappa_i$, and $\varepsilon_i$.

We observe that $f'(|y|)\le 0$, $f'(|y|) = 0$ if $|y|\le 1$, $\partial_s R_n(t,s) = - M_n(s)$, and $|\bs v_i(s)\cdot \bs n| \le V_n(s) \le M_n(s)$, so that $\kappa_i(t,s) \le 0$. On the other hand, from the equations of motion,
\begin{align*}
\dot \varepsilon_i(s) & = \sum_{j\in I_n} \psi_{i,j}(s) (\bs v_j(s) - \bs v_i(s)) \cdot \bs v_i(s) \\ & \quad - \sum_{j\in I_n:j\ne i} \nabla U(\bs x_i(s)-\bs x_j(s)) \cdot \frac{\bs v_i(s) + \bs v_j(s)}{2}\,.
\end{align*}
Then, by \eqref{a6} and using that $\psi_{i,j}=\psi_{j,i}$ and $\nabla U$ is odd,
\[
\begin{split}
& \partial_s W(\bs X^{(n)}(s); \mu, R_n(t,s)) \\ & \quad \le \sum_{i,j\in I_n:i\ne j} \Big(f^{\mu,R_n(t,s)}_i - f^{\mu,R_n(t,s)}_j\Big) \nabla U (\bs x_i(s)-\bs x_j(s))\cdot \frac{\bs v_i(s)}{2}\\ &  \quad\quad + \frac 12 \sum_{i,j\in I_n} \psi_{i,j}(s) (\bs v_j(s) - \bs v_i(s)) \cdot \Big[\bs v_i(s) f^{\mu,R_n(t,s)}_i - \bs v_j(s) f^{\mu,R_n(t,s)}_j\Big] \\ & \quad = \sum_{i,j\in I_n:i\ne j} \Big(f^{\mu,R_n(t,s)}_i - f^{\mu,R_n(t,s)}_j\Big)  \nabla U (\bs x_i(s)-\bs x_j(s)) \cdot \frac{\bs v_i(s)}{2}\\ & \quad\quad - \sum_{i,j\in I_n}  f^{\mu,R_n(t,s)}_i \psi_{i,j}(s) (\bs v_j(s) - \bs v_i(s))^2 \\ & \quad \quad + \frac 12 \sum_{i,j\in I_n} \Big(f^{\mu,R_n(t,s)}_i - f^{\mu,R_n(t,s)}_j\Big) \psi_{i,j}(s) (\bs v_j(s) - \bs v_i(s)) \cdot \bs v_j(s) \\ & \quad \le J_1 + J_2\,,
\end{split}
\]
with 
\[
\begin{split}
J_1 & =  \sum_{i,j\in I_n:i\ne j} \Big(f^{\mu,R_n(t,s)}_i - f^{\mu,R_n(t,s)}_j\Big) \nabla U (\bs x_i(s)-\bs x_j(s))\cdot \frac{\bs v_i(s)}{2}\,,\\ J_2 & = \frac 12 \sum_{i,j\in I_n}\Big(f^{\mu,R_n(t,s)}_i - f^{\mu,R_n(t,s)}_j\Big)  \psi_{i,j}(s) (\bs v_j(s) - \bs v_i(s)) \cdot \bs v_j(s) \,.
\end{split}
\]
From \eqref{U} we have $ |\bs x|\, |\nabla U(\bs x)| \le C[1+U(\bs x)]$ for any $\bs x \ne 0$. Then, by the inequalities
\[
\big| f^{\mu,R}_i - f^{\mu,R}_j \big| \le 2 \,
\frac{|\bs x_i - \bs x_j|}{R} \big[ \chi_i(\mu,2R) + \chi_j(\mu,2R)\big]\,,
\]
$|\bs v_i(s)| \le M_n(s) = - \partial_s R_n(t,s)$, and $R_n(t,s) > \bar r$, we have,
\begin{equation}
\label{a7}
\begin{split}
J_1 & \le - C \frac{\partial_s R_n(t,s)}{R_n(t,s)} \sum_{i\ne j} [1+U(\bs x_i(s)-\bs x_j(s))]  \\ & \qquad \times \chi_i(\mu,4R_n(t,s)) \chi_j(\mu,4R_n(t,s)) \chi_{i,j}(s)\;,
\end{split}
\end{equation}
where we shortened $\chi_{i,j}(s) = \chi (|\bs x_i(s)-\bs x_j(s)|\le \bar r)$. Analogously, since $|(\bs v_j(s) - \bs v_i(s)) \cdot \bs v_j(s)| \le C V_n(s)^2 \le C M_n(s) = - C \partial_s R_n(t,s)$, we also have,
\begin{equation}
\label{a8} 
J_2 \le - C \frac{\partial_s R_n(t,s)}{R_n(t,s)} \sum_{i,j\in I_n} \chi_i(s,4R_n(t,s)) \chi_j(s,4R_n(t,s)) \chi_{i,j}(s)\,.
\end{equation}
As $U$ is a superstable potential, by arguing as in the proof of \cite[ Eq.~(2.15)]{CMP}, the double sums in the right hand side of \eqref{a7} and \eqref{a8} can be bounded by $C W(\bs X^{(n)}(s); \mu, 4 R_n(t,s))$; moreover, setting
\begin{equation}
\label{a9}
W(\bs X;R) := \sup_\mu W(\bs X;\mu,R)\;,
\end{equation}
it can be proved that 
\begin{equation}
\label{a10}
W(\bs X;\mu,2R) \le CW(\bs X;R)
\end{equation}
(see e.g.\ \cite{CM,CMP}). In conclusion,
\begin{equation*}
\partial_s W(\bs X^{(n)}(s); \mu, R_n(t,s)) \le - C \frac{\partial_s R_n(t,s)}{R_n(t,s)} W(\bs X^{(n)}(s); R_n(t,s))\;,
\end{equation*}
from which, by integrating and taking the supremum on $\mu$, 
\[
\begin{split}
& W(\bs X^{(n)}(s); R_n(t,s)) \le W(\bs X^{(n)}(0); R_n(t,0)) \\ & - \qquad C \int_0^s\!\rmd\tau\; \frac{\partial_\tau R_n(t,\tau)}{R_n(t,\tau)} W(\bs X^{(n)}(\tau); R_n(t,\tau))\,,
\end{split}
\]
whence 
\begin{equation*}
W(\bs X^{(n)}(s);R_n(t,s)) \le W(\bs X^{(n)}(0);R_n(t,0)) \left( \frac{R_n(t,0)}{R_n(t,s)}\right)^C.
\end{equation*}
Setting $s=t$ and using that $R_n(t,0)\le 2 R_n(t,t) = 2R_n(t)$,  
\begin{equation*}
W(\bs X^{(n)}(t);R_n(t)) \le C \,W(\bs X^{(n)}(0);R_n(t))\;.
\end{equation*}
Then, from \eqref{a4}, \eqref{a9}, and definition \eqref{Q1}, we conclude that
\begin{align*}
\sup_\mu Q(\bs X^{(n)}(t);\mu,R_n(t)) & \le  C \, W (\bs X^{(n)}(0);R_n(t)) 
\\ & \le  C \sup_\mu Q(\bs X^{(n)}(0);\mu,2R_n(t)) \\ & \le  4 C  Q(\bs X) R_n(t)\;,
\end{align*}
which proves \eqref{K0}.
\end{proof}

\begin{corollary}
\label{cor:1}
For each $\bs X\in\mc X$, $n\in\bb N$, and $t\ge 0$ there exists a function $H_t=H_t(Q(\bs X))$ such that 
\begin{equation}
\label{stima Vn}
V_n(t) \le H_t \sqrt{\log(\rme + n)}\,.
\end{equation}
\end{corollary}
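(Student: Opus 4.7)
The plan is to turn the energy estimate \eqref{K0} into a pointwise velocity bound and then close a Gronwall-type inequality for $R_n(t)$.

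First, I would observe that for any fixed $i\in I_n$ and any $s\in[0,t]$, choosing $\mu = \bs x^{(n)}_i(s)\cdot\bs n$ forces $\chi_i(\mu,R_n(s))=1$, so the kinetic term $\tfrac12|\bs v^{(n)}_i(s)|^2$ appears (with a nonnegative weight) in the sum defining $Q(\bs X^{(n)}(s);\mu,R_n(s))$. Since all other summands in \eqref{QR1} are nonnegative, Lemma \ref{lem:1} yields
\[
\tfrac12 |\bs v^{(n)}_i(s)|^2 \le Q(\bs X^{(n)}(s);\mu,R_n(s)) \le K\,Q(\bs X)\,R_n(s)\,.
\]
Taking the maximum over $i\in I_n$ and the supremum over $s\in[0,t]$, and using that $s\mapsto R_n(s)$ is nondecreasing, I obtain
\begin{equation}
\label{plan:Vn2}
V_n(t)^2 \le 2K\,Q(\bs X)\,R_n(t)\,.
\end{equation}

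Next, I would feed \eqref{plan:Vn2} into the definition \eqref{Rn1}. Differentiating in $t$ gives
\[
\dot R_n(t) = M_n(t) = 1 + V_n(t)^2 \le 1 + 2K\,Q(\bs X)\,R_n(t)\,,
\]
with initial value $R_n(0) = (1+\bar r)\log(\rme+n)$. This is a linear differential inequality, and Gronwall's lemma produces
\[
R_n(t) \le \rme^{2KQ(\bs X)\,t}\Big[(1+\bar r)\log(\rme+n) + \tfrac{1}{2KQ(\bs X)}\Big] - \tfrac{1}{2KQ(\bs X)}\,.
\]
Since $\log(\rme+n)\ge 1$, the right hand side is bounded by $\widetilde H_t\,\log(\rme+n)$ for some $\widetilde H_t=\widetilde H_t(Q(\bs X))$, and inserting back into \eqref{plan:Vn2} gives the claim with $H_t := \sqrt{2KQ(\bs X)\,\widetilde H_t}$.

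There is no serious obstacle: the only point requiring care is ensuring that the ``single particle'' extraction in the first step is legitimate, which works because $Q(\bs X;\mu,R)$ is a sum of manifestly nonnegative contributions (the $+1$ inside the braces in \eqref{QR1} guarantees nonnegativity even in the $a=0$ case, where $U$ is bounded but positive at the origin by the superstability assumption). The bound obtained has the dependence $\sqrt{\log(\rme+n)}$ exactly because the initial local energy is allowed to grow logarithmically in $\mc X$, and this is preserved (up to multiplicative factors $\rme^{CQ(\bs X)t}$) by the energy estimate of Lemma \ref{lem:1}.
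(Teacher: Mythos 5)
Your proposal is correct and follows essentially the same route as the paper: extract the single-particle kinetic energy from $Q$ via Lemma \ref{lem:1} to get $M_n(t)\le C\,Q(\bs X)R_n(t)$, close a Gronwall inequality for $R_n$ using \eqref{Rn1}, and take a square root. The only (cosmetic) difference is that you phrase Gronwall differentially rather than in integral form, and you are slightly more explicit than the paper about choosing $\mu=\bs x^{(n)}_i(s)\cdot\bs n$ and using the monotonicity of $s\mapsto R_n(s)$ to pass from the bound at time $s$ to the sup defining $V_n(t)$.
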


\begin{proof}
From \eqref{QR1}, \eqref{Vn1}, and \eqref{K0} we have 
\[
M_n(t) \le 2\sup_\mu Q(\bs X^{(n)}(t);\mu,R_n(t)) \le 2KQ(\bs X)R_n(t)\,.
\]
Therefore, from the definition \eqref{Rn1},
\[
R_n(t) \le (1+\bar r) \log(\rme+n) + 2KQ(\bs X) \int_0^t\!\rmd s\; R_n(s)\,,
\]
which implies, by Grönwall's inequality, 
\begin{equation}
\label{Rn<}
R_n(t) \le  (1+\bar r) \log(\rme+n) \exp\Big(2KQ(\bs X) t\big)\,,
\end{equation}
so that
\[
M_n(t) \le (1+\bar r) \log(\rme+n) 2KQ(\bs X) \exp\Big(2KQ(\bs X) t\big)\,.
\]
As $V_n(t) \le \sqrt{M_n(t)}$, the inequality \eqref{stima Vn} is thus proved with
\[
H_t = \sqrt{2(1+\bar r) KQ(\bs X)} \exp\Big(KQ(\bs X)t\Big)\,.
\]
\end{proof}

\begin{proof}[Proof of Theorem \ref{thm:0}] 
Let
\begin{equation}
\label{de}
\delta_i(n,t) := |\bs x^{(n)}_i(t) - \bs x^{(n-1)}_i(t)| + |\bs v^{(n)}_i(t) - \bs v^{(n-1)}_i(t)|\,.
\end{equation}
From the equations of motion in integral form it follows that, for any $i\in I_{n-1}$, 
\begin{align}
\label{de<}
|\delta_i(n,t)|&  \le \int_0^t\! \rmd s \int_0^s\! \rmd s'\, |\dot{\bs v}^{(n)}_i(s') - \dot{\bs v}^{(n-1)}_i(s')| + \int_0^t\! \rmd s\,  |\dot{\bs v}^{(n)}_i(s) - \dot{\bs v}^{(n-1)}_i(s)| \nonumber \\ & \le (1+t)  \int_0^t\! \rmd s\,  |\dot{\bs v}^{(n)}_i(s) - \dot{\bs v}^{(n-1)}_i(s)| \nonumber \\ & \le (1+t) \int_0^t\!\rmd s\; G^{(n)}_i(s)\,,
\end{align}
where
\[
\begin{split}
& G^{(n)}_i(s) := \big| \nabla \Theta(\bs x^{(n)}_i(s)) - \nabla \Theta(\bs x^{(n-1)}_i(s))\big| \\ & \quad + \sum_{j\in I_{n-1}:\,j\ne i} \Big|\nabla U(\bs x^{(n)}_i(s)-\bs x^{(n)}_j(s)) - \nabla U(\bs x^{(n-1)}_i(s)-\bs x^{(n-1)}_j(s))\Big| \\ & \quad +\sum_{j\in I_{n-1}}\Big| \psi^{(n)}_{i,j}(s)\, (\bs v^{(n)}_j(s) - \bs v^{(n)}_i(s)) - \psi^{(n-1)}_{i,j}(s)\, (\bs v^{(n-1)}_j(s) - \bs v^{(n-1)}_i(s))\Big| \\ & \quad + \sum_{j\in I_n\setminus I_{n-1}} \Big| \nabla U(\bs x^{(n)}_i(s)-\bs x^{(n)}_j(s))\Big| +  \Big|\psi^{(n)}_{i,j}(s)\, (\bs v^{(n)}_j(s)-\bs v^{(n)}_i(s)) \Big|\,.
\end{split}
\]
By \eqref{stima Vn}, each particle $i\in I_n$ may interact during the time $[0,t]$ only with the particles $j$ such that $|\bs x_j-\bs x_i|\le p_n(t)$, with
\begin{equation}
\label{pn1}
p_n(t) := \bar r + 2t H_t \sqrt{\log(\rme+n)}\,.
\end{equation}
We now fix $k\in\bb N$ and define
\begin{equation}
\label{nk}
n(k) := \min\{m\in\bb N \colon n > \bar r + k + p_n(t) \quad \forall\, n\ge m \}\;.
\end{equation}
For $n\ge n(k)$ each particle $i\in I_k$ does not interact, during the time $[0,t]$, with the particles $j\in I_n \setminus I_{n-1}$. Otherwise stated, for any $i\in I_k$ and $s\in [0,t]$ the last sum in the definition of $G^{(n)}_i(s)$ is equal to zero. To control the first terms we observe that, we since $\Theta$ and $U$ are of the form \eqref{V} and \eqref{U}, respectively, we have, setting $\zeta := (\gamma + 2)/\gamma$ and $\eta :=(b+2)/b$, 
\begin{align*}
|\nabla \Theta(\bs x) - \nabla \Theta(\bs y)| & \le C \big[\Theta(\bs x)^\zeta + \Theta(\bs y)^\zeta + \chi_1(\bs x) + \chi_1(\bs y) \big] |\bs x - \bs y|\,, \\
|\nabla U(\bs x) - \nabla U(\bs y)| & \le C\left[U(\bs x)^\eta + U(\bs y)^\eta + \chi(|\bs x|\le \bar r) + \chi(|\bs y|\le \bar r) \right] |\bs x - \bs y |\;,
\end{align*}
where  we used the shorten notation $\chi_1(\bs z) := \chi(|\bs z^\perp|\ge h)$. To control the second one, we instead use that
\[
\begin{split}
& \Big| \psi^{(n)}_{i,j}(\bs v^{(n)}_j - \bs v^{(n)}_i) - \psi^{(n-1)}_{i,j}\, (\bs v^{(n-1)}_j - \bs v^{(n-1)}_i)\Big| \le 2 V_n \big|\psi^{(n)}_{i,j} - \psi^{(n-1)}_{i,j} \big| \\ & \qquad \qquad  + \psi^{(n-1)}_{i,j} \big| \bs v^{(n)}_j - \bs v^{(n-1)}_j - \bs v^{(n)}_i + \bs v^{(n-1)}_i\big|\,.
\end{split}
\]
Therefore, recalling the definition \eqref{de} and using that $\psi$ is a smooth function with compact support, we obtain, for any $n\ge n(k)$, $s\ge 0$, and $i\in I_k$,
\begin{equation}
\label{Gni}
\begin{split}
& G^{(n)}_i(s) \le C \big[\Theta(\bs x^{(n)}_i(s))^\zeta + \Theta(\bs x^{(n-1)}_i(s))^\zeta + 1 \big]\delta_i(n,s) \\ & \qquad + C {\sum_j}^* (1+V_n(s)) [\delta_i(n,s) + \delta_j(n,s)] \\ & \qquad + C{\sum_{j:j\ne i}}^* \left[U(\bs x^{(n)}_i(s) - \bs x^{(n)}_j(s))^\eta + U(\bs x^{(n-1)}_i(s)-\bs x^{(n-1)}_j(s))^\eta + 1 \right] \\ & \qquad\qquad \times \left[\delta_i(n,s) + \delta_j(n,s)\right]\;,
\end{split}
\end{equation}
where $\sum^* $ denotes the sums restricted to all the particles $j\in I_{n-1}$ closer than $\bar r$ to $\bs x^{(n)}_i(s)$ or $\bs x^{(n-1)}_i(s)$. 
Therefore, introducing
\begin{equation}
\label{uk1}
u_k(n,t) := \sup_{i\in I_k} \delta_i(n,t),
\end{equation}
by \eqref{de<} and recalling the definition \eqref{QR1} we have, for any $t\ge 0$,
\[
\begin{split}
u_k(n,t) & \le C\int_0^t\!\rmd s\; \sup_\mu\bigg\{\left[Q(\bs X^{(n)}(s);\mu,\bar r) + Q(\bs X^{(n-1)}(s);\mu,\bar r)\right] (1+V_n(s)) \\ & \quad +
\left[Q(\bs X^{(n)}(s);\mu,\bar r) + Q(\bs X^{(n-1)}(s);\mu,\bar r)\right]^{\eta_1}\bigg\} u_{k_1}(n,s)\,,
\end{split}
\]
where $\eta_1=\max\{\zeta;\eta\}$ and $k_1 =\lfloor k+p_n(t)  \rfloor + 1$. Now, since $R_n(t) \ge R_{n-1}(t) \ge \bar r$, using Lemma \ref{lem:1} and Corollary \ref{cor:1}, we get
\begin{equation}
\label{uk<}
u_k(n,t) \le g_n(t) \int_0^t\!\rmd s\; u_{k_1}(n,s)\,,
\end{equation}
with
\begin{equation}
\label{gn}
g_n(t) = A_t \log^{\eta'}(\rme+n)\,,
\end{equation}
$\eta' = \max\{\eta_1;3/2\}$, and $A_t = A_t(Q(\bs X))>0$ large enough. 

Setting $k_q = \lfloor k_{q-1}+p_n(t) \rfloor + 1$, $q\in\bb N$, and $k_0=k$, we can iterate the inequality \eqref{uk<} $\ell$ times, with 
\begin{equation}
\label{ell}
\ell := \left\lfloor \frac{n-k-1}{1+p_n(t)} \right\rfloor
\end{equation}
(which ensures $n>n(k_{\ell-1})$). Since
\begin{equation}
\label{an}
u_k(n,t) \le a_n(t) := 2(1+t) H_t \sqrt{\log(\rme+n)}\,,
\end{equation}
we finally get,
\begin{equation}
\label{uk<fi}
u_k(n,t) \le a_n(t) \frac{[g_n(t)t]^\ell}{\ell !}\;.
\end{equation}
Recalling the definitions \eqref{de} and \eqref{uk1}, the existence of the infinite dynamics via the limits \eqref{eu1} now follows from the absolute convergence, uniform on compact time intervals, of the series $\sum_n u_k(n,t)$, which is a straightforward consequence of \eqref{uk<fi}. The proof of uniqueness can be done in a very similar way and it is therefore omitted.

It remains to show that $\bs X(t)\in \mc X$ for any $t>0$, i.e., that $Q(\bs X(t)) < +\infty$, see \eqref{Q1}. By \eqref{a4} it is enough to estimate $W(\bs X(t);\mu,R)$ instead of $Q(\bs X(t);\mu,R)$. In what follows, we fix $t > 0$, and $R,\mu$ with $R>\log(\rme+|\mu|)$. Given $\alpha \ge 1$ to be fixed large enough later, let $n_0 = \left\lfloor \alpha \, \rme^{2R} \right\rfloor+1$ (where $\left\lfloor x \right\rfloor$ denotes the integer part of the real $x$). Since $\log(\rme+n_0)>R$, by \eqref{Rn1} we have $R_{n_0}(t)>R$ and therefore, by \eqref{K0} and \eqref{a4},
\[
\begin{split}
W(\bs X^{(n_0)}(t);\mu,R) & \le W(\bs X^{(n_0)}(t);\mu,R_{n_0}(t)) \le Q(\bs X^{(n_0)}(t) ;\mu, 2R_{n_0}(t))  \\ & \le 2K \rme^{Kt}Q(\bs X) R_{n_0}(t) \\ & \le  2K \rme^{Kt}Q(\bs X) \left[1+\bar r + t (1+ H_t^2)) \right] \log(\rme+n_0)\,,
\end{split}
\]
where in the last inequality we used again \eqref{Rn1}, together with \eqref{Vn1} and \eqref{stima Vn}. From the choice of $n_0$ we conclude that there is $B_t = B_t(Q(\bs X))$ such that
\begin{equation}
\label{Wn0}
W(\bs X^{(n_0)}(t);\mu,R) \le B_t (R+\log\alpha)\,.
\end{equation}
On the other hand,
\begin{equation}
\label{WWn}
\begin{split}
W(\bs X(t);\mu,R) & \le W(\bs X^{(n_0)}(t);\mu,R) \\ & + \sum_{n>n_0} |W(\bs X^{(n)}(t);\mu,R) - W(\bs X^{(n-1)}(t);\mu,R)|\,.
\end{split}
\end{equation}
Let us estimate the sum on the right-hand side of \eqref{WWn}. We have,
\begin{eqnarray}
\label{ka0}
&& |W(\bs X^{(n)}(t);\mu,R) - W(\bs X^{(n-1)}(t);\mu,R)| \nonumber \\&& \quad \le \, \sum_i f\left(\frac{|\bs x^{(n)}_i(t)\cdot \bs n - \mu|}{R}\right) \big|\varepsilon^{(n)}_i
- \varepsilon^{(n-1)}_i \big| \nonumber\\&&\quad + \sum_i \left| f\left(\frac{|\bs x^{(n)}_i(t)\cdot \bs n - \mu|}{R}\right) - f\left(\frac{|\bs x^{(n-1)}_i(t)\cdot \bs n - \mu|}{R}\right) \right| \varepsilon^{(n-1)}_i,
\qquad\quad
\end{eqnarray}  
where
\begin{equation*}
\varepsilon^{(n)}_i = \frac{|\bs v^{(n)}_i(t)|^2}2 + \frac 12 \sum_{j\in I_n : j\ne i} U(\bs x^{(n)}_i(t) - \bs x^{(n)}_j(t)) + \Theta(\bs x_i) + 1\;.
\end{equation*}
If $|\bs x^{(n)}_i(t)-\mu|\le 2R$ then all the particles $j\in I_n$ such that $|\bs x^{(n)}_i(t)-\bs x^{(n)}_j(t)| \le \bar r$ or $|\bs x^{(n-1)}_i(t)-\bs x^{(n-1)}_j(t)| \le \bar r$ are initially contained in the intersection of the tube $\Omega$ with the ball of center $\mu$ and radius $\Gamma_n(t)$, with $\Gamma_n(t) = C [ R + p_n(t)]$, see \eqref{pn1}. In particular, by choosing $\alpha$ large enough, for any $n \ge n_0$ each particle $i$ such that $|\bs x^{(n)}_i(t) - \mu| \le 2R$ does not interact with the particles $j\in I_n \setminus I_{n-1}$, so that 
\[
\begin{split}
& \big|\varepsilon^{(n)}_i - \varepsilon^{(n-1)}_i \big| \le C \frac{|\bs v^{(n)}_i(t)|+|\bs v^{(n-1)}_i(t)|}2 \delta_i(n,t) \\ & \qquad +C \big[\Theta(\bs x^{(n)}_i(s))^\zeta + \Theta(\bs x^{(n-1)}_i(s))^\zeta + 1 \big]\delta_i(n,s)\\ & \qquad + C{\sum_{j:j\ne i}}^* \left[U(\bs x^{(n)}_i(s) - \bs x^{(n)}_j(s))^\eta + U(\bs x^{(n-1)}_i(s)-\bs x^{(n-1)}_j(s))^\eta + 1 \right] \\ & \qquad\qquad \times [\delta_i(n,s) + \delta_j(n,s)]\;,
\end{split}
\]
where the notation $\sum_{j:j\ne i}^*$ is as in \eqref{Gni}. In particular, the particles involved in this sum are initially contained in the intersection of the tube $\Omega$ with the ball of center $\mu$ and radius $\Gamma_n(t)$. Therefore, by arguing as in obtaining \eqref{uk<} we conclude that, setting $\Delta_n(t) := \max\{\delta_i(n,t) : |\bs x_i -\mu | \le \Gamma_n(t)\}$, there exists $C_t = C_t(Q(\bs X))$ such that if $|\bs x_i-\mu| \le \Gamma_n(t)$ then, for any $n>n_0$,
\begin{equation}
\label{ka1}
\big|\varepsilon^{(n)}_i - \varepsilon^{(n-1)}_i \big| \le C_t \log^{\eta_1} (\rme+n) \Delta_n(t)\,,
\end{equation}
with $\eta_1 = \max\{\eta;\zeta\}$ as before. On the other hand,
\begin{eqnarray}
\label{ka2}
&& \left| f\left(\frac{|\bs x^{(n)}_i(t) - \mu|}{R}\right) - 
f\left(\frac{|\bs x^{(n-1)}_i(t) - \mu|}{R}\right) \right| 
\nonumber\\&& \qquad\qquad 
\le 2 \frac{|\bs x^{(n)}_i(t)-\bs x^{(n-1)}_i(t)|}{R} 
\chi\big(|\bs x^{(n-1)}_i(t) - \mu| \le \delta_i(n,t) + 2R \big)
\nonumber\\&& \qquad\qquad \le C \chi\big(|\bs x^{(n-1)}_i(t) - \mu|
\le \Delta_n(t) +2R\big) \, \Delta_n(t)\,. 
\end{eqnarray}
To estimate $\Delta_n(t)$ we can use \eqref{uk<fi} with $k = |\mu| + \Gamma_n(t)$. Recalling \eqref{ell} and that $n_0 = \left\lfloor \alpha \, \rme^{2R} \right\rfloor+1$, we choose $\alpha$ large enough so that $|\mu|+R \le \frac 12 \log(\rme + n)$ and $n-k-1 > \frac 12 n$ for any $n>n_0$. Under this assumptions, we can find $D_t=D_t(Q(\bs X))>0$ such that 
\begin{equation}
\label{ka<}
\Delta_n(t) \le D_t \exp\left[-\frac{n}{D_t \sqrt{\log(\rme+n)}}\right] \qquad \forall\, n>n_0 
\end{equation}
(in particular $\Delta_n(t) \le C$). Then, inserting the above bounds in \eqref{ka0}, 
\[
\begin{split}
&|W(\bs X^{(n)}(t);\mu,R) - W(\bs X^{(n-1)}(t);\mu,R)| \\ & \qquad \qquad  \le C_t \log^{\eta_1}(\rme+n) W(\bs X^{(n)}(t);\mu,2R)  \Delta_n(t) \\ & \qquad \qquad  \quad  +  C W(\bs X^{(n-1)}(t);\mu, \Delta_n(t) + 2R)  \Delta_n(t)\,.
\end{split}
\]
As $n>n_0$ and $\log(\rme+n_0)>R$, from \eqref{Rn1} we have $R_n(t)>R$ and therefore, by \eqref{K0} and \eqref{a4},
\[
\begin{split}
 &|W(\bs X^{(n)}(t);\mu,R) - W(\bs X^{(n-1)}(t);\mu,R)|  \\ & \qquad \qquad  \le C_t \log^{\eta_1}(\rme+n) W(\bs X^{(n)}(t);\mu,2R_n(t))  \Delta_n(t) \\ & \qquad \qquad  \quad  + C W(\bs X^{(n-1)}(t);\mu, 2R_{n-1}(t) + C)  \Delta_n(t) \\ & \qquad \qquad  \le E_t \big[ \log^{\eta_1}(\rme+n) R_n(t) + R_{n-1}(t) \big] \Delta_n(t)\,, 
\end{split}
\]
where $E_t=E_t(Q(\bs X))$. In view of Lemma \ref{lem:1} and Corollary \ref{cor:1}, from \eqref{ka<} we deduce that the sum in the right-hand side of \eqref{WWn} is bounded by a constant, independent of $\mu\in \bb R$ and $R>0$ provided $R>\log(\rme+|\mu|)$. Therefore, since $Q(\bs X(t);\mu,R) \le W(\bs X(t);\mu,R)$, from \eqref{Wn0} we conclude that $\bs X(t) \in \mc X$. 
\end{proof}

\section*{Conflict of Interest} The authors declare that they have no conflict of interest.

\end{document}